\newcommand\myeq{\stackrel{\mathclap{\normalfont\mbox{(a)}}}{=}}
\newcommand\myeqb{\stackrel{\mathclap{\normalfont\mbox{(b)}}}{=}}
\newtheorem{theorem}{\textbf{Theorem}}
\newtheorem{lem}{\textbf{Lemma}}
\newtheorem{col}{\textbf{Corollary}}
\newenvironment{proof}[1][Proof]{\begin{trivlist}
\item[\hskip \labelsep {\bfseries #1}]}{\end{trivlist}}
\begin{document}

\title{Handover Rate Characterization in 3D Ultra-Dense Heterogeneous Networks}

\author{\IEEEauthorblockN{Rabe Arshad$^1$, Hesham ElSawy$^2$, Lutz Lampe$^1$, and Md. Jahangir Hossain$^1$\\
       $^1$University of British Columbia, Canada.
       $^2$King Abdullah University of Science and Technology, Saudi Arabia.\vspace{-0.6cm}}

}
\maketitle
\begin{abstract}
Ultra-dense networks (UDNs) envision the massive deployment of heterogenous base stations (BSs) to meet the desired traffic demands. Furthermore, UDNs are expected to support the diverse devices e.g., personal mobile devices and unmanned ariel vehicles. User mobility and the resulting excessive changes in user to BS associations in such highly dense networks may however nullify the capacity gains foreseen through BS densification. Thus there exists a need to quantify the effect of user mobility in UDNs. In this article, we consider a three-dimensional $N$-tier downlink network and determine the association probabilities and inter/intra tier handover rates using tools from stochastic geometry. In particular, we incorporate user and BSs' antenna heights into the mathematical analysis and study the impact of user height on the association and handover rate. The numerical trends show that the intra-tier handovers are dominant for the tiers with shortest relative elevation w.r.t. the user and this dominance is more prominent when there exists a high discrepancy among the tiers' heights. However, biasing can be employed to balance the handover load among the network tiers.

\end{abstract}

\begin{IEEEkeywords}
3-Dimensional Networks, Association Probabilities, Handover Rate, Stochastic Geometry, Ultra-Dense Networks
\end{IEEEkeywords}

\IEEEpeerreviewmaketitle
\vspace*{-0.6cm}
\section{Introduction}
Extreme densification of base stations (BSs) realizing ultra-dense networks (UDNs) is considered a key enabler to meet the spectral efficiency requirements of fifth generation (5G) cellular systems. UDNs face various challenges in supporting user mobility while offering enhanced user capacity. The deployment of more BSs within the same geographical region shrinks the service area of each BS. Thus user mobility in such a highly dense network may result in frequent user-to-BS association changes, which may jeopardize the key performance indicators (KPIs) \cite{5GUDN}. Several studies including \cite{sun2017impact} and \cite{park2016user} have been conducted in the literature to capture/manage the effect of user mobility on UDN performance metrics. However, none of the aforementioned studies quantified the effect of user mobility on user-to-BS associations. \\

Handover (HO) is the process of changing user association from one BS to another, which is triggered to maintain the best connectivity or signal-to-interference-plus-noise-ratio (SINR). HO frequency/rate depends on various factors including BS intensity, BS transmit power, and user velocity. Several researchers have characterized HO rates in wireless networks by exploiting tools from stochastic geometry. In contrast to the classical works involving coverage oriented BSs deployment that implies hexagonal coverage areas, stochastic geometry has enabled the characterization of HO rates in capacity oriented networks encompassing irregular BSs coverage regions. For instance,~\cite{Lin} exploits stochastic geometry tools to characterize HO rate in a Poisson point process (PPP) based single tier network with the random waypoint user mobility model. The HO rates for multi-tier networks are characterized in~\cite{HOben} with an arbitrary user mobility model. The model in \cite{HOben} is extended in \cite{HO_PCP} for the BSs that are deployed according to a Poisson cluster process (PCP). The authors in~\cite{diff_pathloss} conducted the HO rate analysis with different path loss exponents for each tier. However, none of the aforementioned studies incorporated user/BSs antenna heights into the mathematical analysis. A recent study \cite{lower_height} incorporated user height in the network analysis and found that decrease in the absolute height difference between the user and the BS in UDN leads to increased area spectral efficiency. This dictates that the network elements' heights should be carefully incorporated in the UDN performance analysis. To the best of authors' knowledge, no study exists that quantifies the interplay between user/BS antenna heights, association probabilities, and HO rates as a function of BS intensity, which is the main contribution in this work. In particular, we present a height-aware analytical model that characterizes the association probabilities and HO rates in a three-dimensional PPP based $N$-tier UDN. In the developed model, a user could be a pedestrian, a land vehicle or an unmanned aerial vehicle (UAV).

We consider an $N$-tier UDN where the BSs belonging to $k$-th tier, $k\in{1,...,N}$ are modeled via a homogenous PPP $\Phi_{k}$ with intensity $\lambda_k$, transmission power $P_k$, bias factor $B_k$, and antenna height $h_k$. As in ~\cite{HOben},\cite{HO_PCP}, we consider a power law path loss model with path loss exponent $\eta>2$. The disparity in the BSs transmit powers and heights yields a weighted Voronoi tessellation~\cite{voronoi}. A mobile user following an arbitrary horizontal mobility model with velocity $v$, changes its association as soon as it crosses the coverage boundary of serving BS.

 In the next section, we characterize the user-to-BS association probabilities and the service distance distributions, which are utilized to eventually derive the HO rates.
\begin{figure}[!t]
\centering
\includegraphics[width=0.85\linewidth,keepaspectratio]{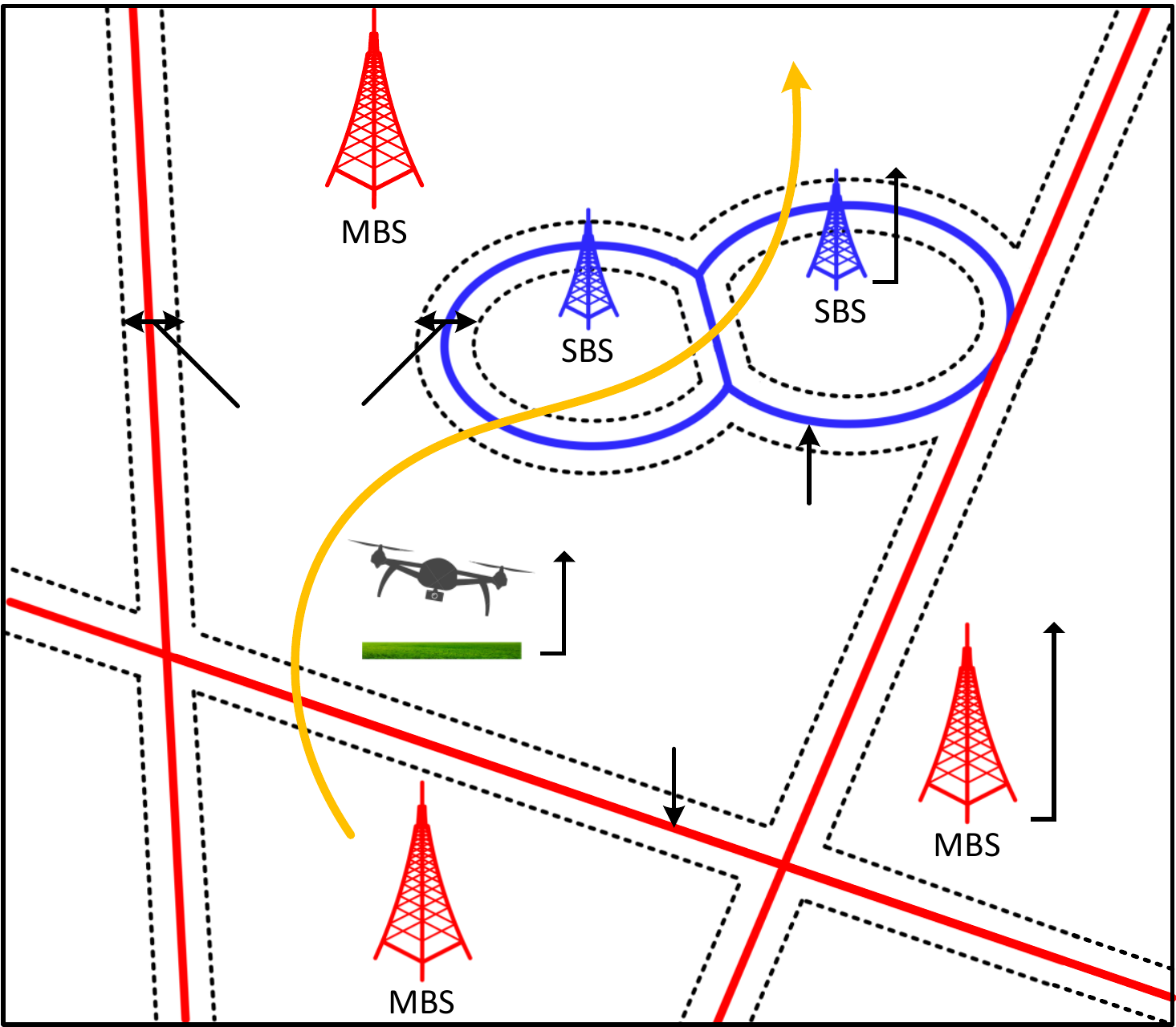}
\begin{picture}(0,0)\small
\put(-172,113){2$\Delta d$}
\put(-81,91){$\mathcal{T_\mathrm{ms}}$}
\put(-105,56){$\mathcal{T_\mathrm{mm}}$}
\put(-88,168){$\mathcal{T_\mathrm{u}}$}
\put(-59,161){${h_\mathrm{s}}$}
\put(-23,56){${h_\mathrm{m}}$}
\put(-113,76){${h_\mathrm{u}}$}
\end{picture}\normalsize
\caption{Two tier Voronoi tessellation showing inter/intra tier HO boundaries between MBS (red) and SBS (blue). Red and blue solid lines represent MBS and SBS coverage boundaries, respectively, while black dotted lines represent $\Delta d$ extended boundaries. $\mathcal{T}_{kj},k,j\in\{\mathrm{m,s}\}$ represents the HO boundaries between $k$ and $j$ tier BSs, $\mathcal{T}_\mathrm{u}$ represents the user trajectory, and $h_x, x \in \{$m, s, u$\}$ represents the height of MBS, SBS, and user, respectively }
\label{boundaries}
\vspace{-0.25cm}
\end{figure}
\vspace{-0.25cm}
\section{Handover Rates}
Without loss of generality, we compute the inter and intra-tier HO rates by considering any two tiers, say, $k,j\in\{\mathrm{m,s}\}$ where `m' and `s' denote macro BS (MBS) and small BS (SBS), respectively. Let $\mathcal{T}_{kj}$ be the set of cell boundaries between the $k$-th and the $j$-th tier BSs formed by the virtue of weighted voronoi tessellation (see Fig. \ref{boundaries}). We conduct our analysis on a test user that follows an arbitrary long trajectory $\mathcal{T}_{\mathrm{u}}$ and performs a type $kj$ HO (from $k$-th to $j$-th tier) as soon as it crosses the $kj$ cell boundary. Let $H_{kj}$ be the HO rate per unit time experienced by the test user along its trajectory, which depends on the number of $kj$ boundary crossings per unit trajectory length and the user velocity. In order to determine the HO rate $H_{kj}$, we need to calculate the total number of intersections $N_{kj}$ between $\mathcal{T}_{\mathrm{u}}$ and $\mathcal{T}_{kj}$. Note that the number of intersections between $\mathcal{T}_{\mathrm{u}}$ and $\mathcal{T}_{kj}$ is identical to that of intersections between $\mathcal{T}_{\mathrm{u}}$ and $\mathcal{T}_{jk}$, i.e., $N_{kj}=N_{jk}$. It is worth stating that the number of intersections between the user trajectory and the cell boundaries can be quantified by determining the intensities of cell boundaries. Let $\mu(\mathcal{T}_{kj})$ denote the length intensity of $kj$ cell boundaries, which is the expected length of $kj$ cell boundaries in a unit square. From \cite{20a}, we have a general HO rate expression as a function of $\mu(\mathcal{T}_{kj})$, which is given by
\begin{align}
H_{kj}=\begin{cases}
\frac{2}{\pi}\mu(\mathcal{T}_{kj})v,\text{ for } k= j\\
\frac{1}{\pi}\mu(\mathcal{T}_{kj})v,\text{ for } k\neq j
\end{cases}
\label{Hkj}
\end{align}
\normalsize
\noindent where $\frac{.}{\pi}\mu(\mathcal{T}_{kj})$ denotes the HO rate per unit length (HOL) and $v$ represents the user velocity. Thus the total HO rate is given by
\vspace{-0.2cm}
\begin{align}
H_{\rm Total}=\sum_{k}\sum_{j}H_{kj}.
\end{align}
It is evident from \eqref{Hkj} that the length intensity of cell boundaries is required to compute the HO rates, which is obtained by determining the probability of having the test user on the cell boundary. Note that higher intensity of cell boundaries leads to higher HO rates. Since it is difficult to conduct the analysis on the boundary line, we follow~\cite{HOben} and extend the cells' boundaries by an infinitesimal width $\Delta d$, as illustrated in Fig. \ref{boundaries}. Note that the probability of having the test user on the $\Delta d-$extended cell boundary is equivalent to the expected area of the boundary in a unit square, which can be termed as the area intensity $\mu(\mathcal{T}_{kj}^{(\Delta d)})$. Once we calculate the area intensity, we can then determine the length intensity by letting $\Delta d\rightarrow 0$, i.e.,
\vspace{-0.2cm}
\begin{align}
\mu(\mathcal{T}_{kj})=\lim_{\Delta d\rightarrow 0}\frac{\mu(\mathcal{T}_{kj}^{(\Delta d)})}{2\Delta d}.
\label{length}
\end{align}
\normalsize
It is worth stating that the characterization of the area intensity of cell boundaries involves determining the user-to-BS association probabilities and the service distance distributions. These probabilities and distributions depend on the relative difference between the user and the BSs' (MBS and SBS) antenna heights, which are not considered in the existing literature. \\
In what follows, we calculate the association probabilities and the probability density functions (PDFs) of the distances between the user and the serving BS, which are then exploited to determine the area intensity and finally the HO rate. Note that the mathematical analysis in this paper follows the two-dimensional approach parameterized with network elements' heights.
\vspace{-0.4cm}
\subsection{Association Probabilities}
Let $Z_{\mathrm{m}}$ and $Z_{\mathrm{s}}$ be the Euclidean distance between a test user and the closest MBS and SBS, respectively. The user located in a 3-dimensional plane with height $h_\mathrm{u}$ associates to the MBS if it provides the highest biased received signal strength (RSS)\footnote{As in \cite{HOben,HO_PCP,diff_pathloss,5a,tractable}, we consider a widely accepted maximum biased received power (BRP) based association rule that does not depend on the BS antenna characteristics e.g. aperture and radiation pattern. Also, we assume that the BS antennas are properly designed to cover a wide range of user heights.} i.e., $B_\mathrm{m}P_\mathrm{m} Z_{\mathrm{m}}^{-\eta}>B_\mathrm{s}P_\mathrm{s} Z_{\mathrm{s}}^{-\eta}$. The association probabilities in a 3-dimensional scenario are expressed in the following Lemma.
\begin{lem}
The association probabilities in a PPP based two tier UDN with the user height $h_\mathrm{u}$ are given by
\vspace{-0.2cm}
\begin{multline}
 A_k=\underbrace{1-e^{\pi\lambda_\mathrm{m} \left[h_{\mathrm{um}}^{2} -\beta_{\mathrm{m}j}h_{\mathrm{u}j}^{2}\right]}}_\text{$A_{k_{1}}$}+\\[-0.5em]
 \underbrace{\lambda_{kj}e^{\pi\lambda_\mathrm{m} \left[h_{\mathrm{um}}^{2}-\beta_{\mathrm{ms}}h_{\mathrm{us}}^{2}\right]}}_\text{$A_{k_{2}}$},\text{ for }h_{\mathrm{um}}\leq h_{\mathrm{us}}
 \label{A1}
\end{multline}
\vspace{-0.5cm}
\begin{multline}
 A_k=\underbrace{1-e^{\pi\lambda_\mathrm{s} \left[h_{\mathrm{us}}^{2} -\beta_{\mathrm{s}j}h_{\mathrm{u}j}^{2}\right]}}_\text{$A_{k_{1}}$}+\\[-0.5em]
 \underbrace{\lambda_{kj}e^{\pi\lambda_\mathrm{s} \left[h_{\mathrm{us}}^{2}-\beta_{\mathrm{sm}}h_{\mathrm{um}}^{2}\right]}}_\text{$A_{k_{2}}$},\text{ for }h_{\mathrm{um}}> h_{\mathrm{us}},
 \label{A2}
\end{multline}
where \eqref{A1} and \eqref{A2} hold for $k,j\in\{\mathrm{m,s}\}, k\neq j$ with $\lambda_{kj}= \frac{\lambda_k}{\lambda_k+\lambda_j\beta_{jk}}$, $\beta_{kj}=\frac{1}{\beta_{jk}}=(\frac{B_k P_k}{B_j P_j})^{2/\eta}$, $h_{\mathrm{um}}=\vert h_\mathrm{u}-h_\mathrm{m}\vert$, and $h_{\mathrm{us}}=\vert h_\mathrm{u}-h_\mathrm{s}\vert$.
\label{association}
\end{lem}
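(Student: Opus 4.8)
The plan is to reduce the three-dimensional biased-power association event to a two-dimensional nearest-neighbour computation parameterised by the vertical offsets $h_{\mathrm{um}},h_{\mathrm{us}}$. Writing $R_k$ for the horizontal (ground-projected) distance from the test user to its nearest tier-$k$ BS, the governing geometric identity is $Z_k^2 = R_k^2 + h_{\mathrm{u}k}^2$. Since $\Phi_k$ is a homogeneous planar PPP of intensity $\lambda_k$, the void probability gives $\Pr(R_k>r)=e^{-\pi\lambda_k r^2}$, hence the density $f_{R_k}(r)=2\pi\lambda_k r\,e^{-\pi\lambda_k r^2}$; moreover $R_{\mathrm{m}}$ and $R_{\mathrm{s}}$ are independent because $\Phi_{\mathrm{m}}$ and $\Phi_{\mathrm{s}}$ are. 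I would first substitute $Z_k^2=R_k^2+h_{\mathrm{u}k}^2$ into the rule $B_kP_kZ_k^{-\eta}>B_jP_jZ_j^{-\eta}$ and raise both sides to the power $2/\eta$ to eliminate $\eta$, turning the association event into the linear-in-squared-distance form $R_j^2 > \beta_{jk}(R_k^2+h_{\mathrm{u}k}^2)-h_{\mathrm{u}j}^2$.

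Next I would condition on $R_k=r$ and integrate out $R_j$ via its void probability. The one subtlety, and the step I expect to carry the real content, is that the threshold $\beta_{jk}(r^2+h_{\mathrm{u}k}^2)-h_{\mathrm{u}j}^2$ can be negative; when it is, the constraint $R_j^2>(\text{negative})$ holds automatically, so the conditional association probability saturates at $1$ rather than equalling $e^{-\pi\lambda_j[\cdots]}$. Overlooking this cap would yield a spurious, unphysical answer. The sign flip occurs at the critical horizontal radius $r_0^2=\beta_{kj}h_{\mathrm{u}j}^2-h_{\mathrm{u}k}^2$, and the two regimes of the Lemma correspond precisely to the two signs of $r_0^2$: a positive $r_0^2$ creates an inner disc of horizontal radius $r_0$ inside which association to tier $k$ is certain, whereas $r_0^2\le 0$ removes this disc; under the stated height orderings $h_{\mathrm{um}}\le h_{\mathrm{us}}$ versus $h_{\mathrm{um}}>h_{\mathrm{us}}$ these are exactly the cases in which the vertically-closer tier does or does not admit such a guaranteed-win region.

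I would therefore split $A_k=\int_0^{r_0} f_{R_k}(r)\,dr+\int_{r_0}^{\infty} e^{-\pi\lambda_j[\beta_{jk}(r^2+h_{\mathrm{u}k}^2)-h_{\mathrm{u}j}^2]}f_{R_k}(r)\,dr$ when $r_0^2>0$, and use the single $[0,\infty)$ integral when $r_0^2\le 0$. Each integrand is Gaussian in $u=r^2$, so the substitution $u=r^2$ evaluates everything in closed form: the inner-disc piece yields the constant term $A_{k_1}=1-e^{-\pi\lambda_k r_0^2}$, while the outer piece collapses, after using $\beta_{kj}\beta_{jk}=1$ and the definition $\lambda_{kj}=\lambda_k/(\lambda_k+\lambda_j\beta_{jk})$, to $A_{k_2}=\lambda_{kj}\,e^{\pi\lambda_k[h_{\mathrm{u}k}^2-\beta_{kj}h_{\mathrm{u}j}^2]}$, matching \eqref{A1}; the degenerate regime $r_0^2\le 0$ leaves only the single integral and produces the form \eqref{A2} (with the intensity of the vertically-closer tier in the exponent). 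Finally I would verify the bookkeeping by checking $A_{\mathrm{m}}+A_{\mathrm{s}}=1$, which follows from the identity $1-\lambda_{kj}=\lambda_{jk}$ and pins down the remaining algebra.
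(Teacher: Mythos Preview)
Your proposal is correct and follows essentially the same route as the paper's Appendix~A: both derive the distribution of the nearest-BS distance from the PPP void probability, condition on the distance to the tier-$k$ BS, observe that the conditional constraint on the other tier saturates to probability~$1$ over an inner range (your $r_0$ disc, the paper's range $h_{\mathrm{um}}\le Z_{\mathrm{m}}\le h_{\mathrm{us}}(B_\mathrm{m}P_\mathrm{m}/B_\mathrm{s}P_\mathrm{s})^{1/\eta}$), and split the integral accordingly before evaluating the Gaussian pieces. The only cosmetic difference is that the paper integrates over the 3D distance $Z_k$ (support $[h_{\mathrm{u}k},\infty)$) while you integrate over the horizontal distance $R_k$ (support $[0,\infty)$), which is just the change of variables $Z_k^2=R_k^2+h_{\mathrm{u}k}^2$.
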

\begin{proof}
See Appendix A.
\end{proof}
\vspace{-0.3cm}
\subsection{Service Distance Distribution}
In this section, we calculate the distributions of the distances between the test user and the serving macro and small BSs, which are subsequently used to obtain the area intensities. Note that the distance distributions are characterized here by ordering the BSs w.r.t. to their heights. The service distance distributions are given by the following Lemma.

\begin{lem}
Let $X_{k}$, $k\in\{\mathrm{m,s}\}$, be the horizontal distance between the test user and the serving BS given that the association is with the $k$-th tier BS. Then the PDFs of the horizontal distances between the test user and the serving macro and small BSs are given by
\begin{equation}
    f_{X_{\mathrm{m}}}(x) = \hspace{-0.1cm}\begin{cases}
        \frac{2\pi\lambda_\mathrm{m}}{A_{\mathrm{m}_{1}}} x e^{-\pi\lambda_\mathrm{m} x^2},\text{ for } 0\leq x\leq L_{\mathrm{m}}\\
        \frac{2\pi\lambda_\mathrm{m}}{A_{\mathrm{m}_{2}}} x e^{-\pi x^2\left(\lambda_\mathrm{m}+\lambda_\mathrm{s}P\beta_{\mathrm{sm}}\right)-\pi\lambda_\mathrm{s}( h_{\mathrm{um}}^{2}\beta_{\mathrm{sm}}- h_{\mathrm{us}}^{2})},\\[-0.3em] \hspace{2.5cm}\text{for }L_{\mathrm{m}} \leq x\leq \infty
        \end{cases}
        \label{macrodist}
  \end{equation}
  \vspace{-0.25cm}
  \begin{equation}
    f_{X_{\mathrm{s}}}(x) = \hspace{-0.1cm}\begin{cases}
        \frac{2\pi\lambda_\mathrm{s}}{A_{\mathrm{s}_{1}}} x e^{-\pi\lambda_\mathrm{s} x^2},\text{ for } 0\leq x\leq L_{\mathrm{s}}\\
        \frac{2\pi\lambda_\mathrm{s}}{A_{\mathrm{s}_{2}}} x e^{-\pi x^2\left(\lambda_\mathrm{s}+\lambda_mP_{ms}\right)-\pi\lambda_\mathrm{m}( h_{\mathrm{us}}^{2}\beta_\mathrm{ms}- h_{\mathrm{um}}^{2})},\\[-0.3em] \hspace{2.5cm}\text{for }L_{\mathrm{s}} \leq x\leq \infty
        \end{cases}
        \label{smalldist}
 \end{equation}
  where $f_{X_{\mathrm{m}}}(x)$ and $f_{X_{\mathrm{s}}}(x)$ represent the service distance distributions for the MBS and SBS cases, respectively, while $L_\mathrm{m}$ and $L_\mathrm{s}$ are given by
  \begin{equation*}
  L_{\mathrm{m}}=\begin{cases}
  \sqrt{h_{\mathrm{us}}^2 \beta_\mathrm{ms}-h_{\mathrm{um}}^{2}}, \text{ for } h_{\mathrm{um}}\leq h_{\mathrm{us}}\\
  0\hspace{2.4cm}, \text{ otherwise }\\
  \end{cases}
  \end{equation*}
  \vspace{-0.2cm}
  \begin{equation*}
  L_{\mathrm{s}}=\begin{cases}
  \sqrt{h_{\mathrm{um}}^2 \beta_\mathrm{sm}-h_{\mathrm{us}}^{2}}, \text{ for } h_{\mathrm{um}}> h_{\mathrm{us}}\\
  0\hspace{2.4cm}, \text{ otherwise }\\
  \end{cases}
  \end{equation*}
\end{lem}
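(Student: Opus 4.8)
The plan is to obtain each service-distance PDF as the nearest-BS distance density \emph{thinned} by the event that the tier in question wins the biased-power comparison, and then renormalised by the appropriate association probability from Lemma~\ref{association}. The only geometry that enters is the height relation: since the user sits at height $h_\mathrm{u}$ and a tier-$k$ BS at height $h_k$, the Euclidean link distance and the horizontal distance satisfy $Z_k^2 = X_k^2 + h_{\mathrm{u}k}^2$ with $h_{\mathrm{u}k}=\vert h_\mathrm{u}-h_k\vert$. Because $\Phi_k$ is a homogeneous PPP of intensity $\lambda_k$ in the horizontal plane, the distance to the nearest tier-$k$ BS has the standard density $2\pi\lambda_k x\,e^{-\pi\lambda_k x^2}$, and the void probability $\Pr(X_j>r)=e^{-\pi\lambda_j r^2}$ supplies the thinning factor.

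Next I would translate the association rule into a condition on horizontal distances. Starting from $B_\mathrm{m}P_\mathrm{m}Z_\mathrm{m}^{-\eta}>B_\mathrm{s}P_\mathrm{s}Z_\mathrm{s}^{-\eta}$ and using the height relation together with $\beta_{kj}=(B_kP_k/B_jP_j)^{2/\eta}$, MBS association given $X_\mathrm{m}=x$ becomes $X_\mathrm{s}>\sqrt{\beta_{\mathrm{sm}}(x^2+h_{\mathrm{um}}^2)-h_{\mathrm{us}}^2}$. The key observation is that the radicand is non-negative only once $x$ exceeds $L_\mathrm{m}=\sqrt{h_{\mathrm{us}}^2\beta_{\mathrm{ms}}-h_{\mathrm{um}}^2}$: for $x\le L_\mathrm{m}$ the MBS wins for \emph{every} SBS position, so the conditional association probability equals $1$, whereas for $x>L_\mathrm{m}$ it equals the void probability $e^{-\pi\lambda_\mathrm{s}[\beta_{\mathrm{sm}}(x^2+h_{\mathrm{um}}^2)-h_{\mathrm{us}}^2]}$. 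Ordering the tiers by relative elevation fixes the branch: under the natural ordering $\beta_{\mathrm{ms}}\ge 1$, when $h_{\mathrm{um}}\le h_{\mathrm{us}}$ the macro tier owns a nonempty guaranteed region ($L_\mathrm{m}>0$) while the small tier does not ($L_\mathrm{s}=0$), and the roles reverse when $h_{\mathrm{um}}>h_{\mathrm{us}}$, which is exactly the case split stated for $L_\mathrm{m}$ and $L_\mathrm{s}$.

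Multiplying the nearest-BS density by these two conditional probabilities yields the joint density of ``nearest MBS at $x$ and MBS serves.'' On $[0,L_\mathrm{m}]$ this is $2\pi\lambda_\mathrm{m}x\,e^{-\pi\lambda_\mathrm{m}x^2}$, and on $[L_\mathrm{m},\infty)$ collecting exponents gives $2\pi\lambda_\mathrm{m}x\,\exp\{-\pi x^2(\lambda_\mathrm{m}+\lambda_\mathrm{s}\beta_{\mathrm{sm}})-\pi\lambda_\mathrm{s}(h_{\mathrm{um}}^2\beta_{\mathrm{sm}}-h_{\mathrm{us}}^2)\}$, reproducing the two exponentials in \eqref{macrodist}. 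Conditioning on service divides each piece by the mass it carries; integrating the first over $[0,L_\mathrm{m}]$ returns $1-e^{\pi\lambda_\mathrm{m}[h_{\mathrm{um}}^2-\beta_{\mathrm{ms}}h_{\mathrm{us}}^2]}=A_{\mathrm{m}_1}$ and integrating the second over $[L_\mathrm{m},\infty)$ returns $A_{\mathrm{m}_2}=\lambda_{\mathrm{ms}}e^{\pi\lambda_\mathrm{m}[h_{\mathrm{um}}^2-\beta_{\mathrm{ms}}h_{\mathrm{us}}^2]}$ (here $\beta_{\mathrm{ms}}\beta_{\mathrm{sm}}=1$ collapses several cross terms), both recovering the normalising constants in the claim and confirming consistency with Lemma~\ref{association}. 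The small-BS PDF \eqref{smalldist} then follows verbatim by the symmetric computation with $\mathrm{m}\leftrightarrow\mathrm{s}$.

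I expect the main obstacle to be bookkeeping rather than anything conceptual: extracting the exclusion radius and hence $L_k$ correctly from the biased-power inequality, and keeping the sign of the radicand $h_{\mathrm{u}j}^2\beta_{kj}-h_{\mathrm{u}k}^2$ straight so that the guaranteed region is assigned to the correct tier in each elevation case. A secondary subtlety worth flagging is that the two branches are normalised by their own sub-probabilities $A_{k_1}$ and $A_{k_2}$ rather than by the full $A_k$, reflecting that they are the conditional densities that feed the intra-tier ($x\le L_k$) and inter-tier ($x> L_k$) boundary computations separately; I would therefore verify that the two integrals recover \emph{precisely} those two constants, which is the cleanest cross-check against Lemma~\ref{association}.
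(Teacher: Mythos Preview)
Your proposal is correct and follows essentially the same route as the paper: compute the joint density of ``nearest tier-$k$ BS at horizontal distance $x$ and tier $k$ wins'' as the nearest-point density $2\pi\lambda_k x\,e^{-\pi\lambda_k x^2}$ multiplied by the void-probability factor $e^{-\pi\lambda_j[\beta_{jk}(x^2+h_{\mathrm{u}k}^2)-h_{\mathrm{u}j}^2]}$, then condition. Your derivation is in fact more explicit than the paper's, which compresses the split at $L_k$ and the piecewise normalisation by $A_{k_1},A_{k_2}$ into a single sentence; your observation that the radicand $\beta_{jk}(x^2+h_{\mathrm{u}k}^2)-h_{\mathrm{u}j}^2$ changes sign at $x=L_k$, forcing the ``guaranteed association'' region, is exactly the mechanism behind the two branches and the height-ordering cases for $L_\mathrm{m},L_\mathrm{s}$.
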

\begin{proof}
See Appendix B.
\end{proof}
Since we have computed the service distance distributions, we can now characterize the area intensity of the cell boundaries, which is required to calculate the HO rates. Note that the area intensity of cell boundary refers to the probability of having any arbitrary point on the extended cell boundary in a unit square.
\vspace{-0.2cm}
\subsection{Area Intensities}
In this section, we calculate the area intensity of the cell boundaries. As mentioned earlier, $\mathcal{T}_{kj}$ corresponds to the set of points where the biased received power from the two neighboring BSs is same. Let us assume that the test user located at \textbf{0}$=(0,0,h_\mathrm{u})$ is connected to the MBS located at $(r_0,0,h_\mathrm{m})$. Let $(x_0,y_0,h_\mathrm{s})$ denote the position of a neighboring SBS that provides the best biased RSS among all SBSs. Then $\mathcal{T}_{kj}$ could be any point $(x,y)$ on the trace of cell boundary between macro and small BSs, which can be defined as
\vspace{-0.3cm}
\begin{multline}
\mathcal{T}_{kj}=\Bigg\{(x,y) \bigg | \frac{B_\mathrm{m}P_\mathrm{m}}{[(x-r_0)^2+y^2+h_{\mathrm{um}}^2]^{\eta/2}}=\\
\frac{B_\mathrm{s}P_\mathrm{s}}{[(x-x_0)^2+(y-y_0)^2+h_{\mathrm{us}}^2]^{\eta/2}}\Bigg\}.
\label{Tkj}
\end{multline}
\normalsize
We can now extend the cell boundaries by $\Delta d$, which leads to the extended cell boundaries defined as
\begin{align}
\mathcal{T}_{kj}^{(\Delta d)}=\left\{\mathbf{u}\vert \exists \mathbf{v}\in \mathcal{T}_{kj}, \text{ s.t. } \vert \mathbf{u}-\mathbf{v} \vert <\Delta d\right\}.
\end{align}
\normalsize
Now, we calculate the probability of having the test user on the extended cell boundary given that the user is connected to the tier-$k$ BS, which is given by the following Lemma.
\begin{lem}
\label{th1}
Let a test user located at $\mathbf{0}$ lie on $\mathcal{T}_{kj}^{(\Delta d)}$ while being served by the tier-$k$ BS located at $r_k$, then the conditional probability that $\mathbf{0}\in \mathcal{T}_{kj}^{(\Delta d)}$ is given by
\vspace{-0.1cm}
\begin{eqnarray}
\mathbb{P}[\mathbf{0} \!\in\!\mathcal{T}_{kj}^{(\Delta d)}\vert X\!=\!r_k,n\!=\!k]\!=\!2\lambda_j\Delta d \vartheta(\alpha_{kj},r_k)\!+\!\mathcal{O}(\Delta d^2)
\label{th1eq}
\end{eqnarray}
\normalsize
where $n\in\{\mathrm{m,s}\}$ represents the associated tier, $\mathcal{O}(.)$ denotes the Big O function, and $\vartheta(\alpha_{kj},r_k)$ is given in \eqref{var}.
\begin{figure*}[!t]
\begin{align}
\vartheta(\alpha_{kj},r_k)=\begin{cases}
\frac{1}{\beta_{kj}}\bigints\limits_{0}^{\pi}\sqrt{r_k^2(\beta_{kj}+1)+h_{\mathrm{u}k}^{2}\beta_{kj}-h_{\mathrm{u}j}^{2}\beta_{kj}^{2}-2\beta_{kj}r_k\cos\theta\sqrt{\frac{r_k^2+h_{\mathrm{u}k}^{2}}{\beta_{kj}}-h_{\mathrm{u}j}^{2}}}d\theta \text{ for } k\neq j\\
x\int_{0}^{\pi}\sqrt{2-2\cos{\theta}}d\theta=4x \text{ for } k=j
\label{var}
\end{cases}
\end{align}
\normalsize
\end{figure*}
\begin{proof}
See Appendix C.
\end{proof}
\label{lem3}
\end{lem}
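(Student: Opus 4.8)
The plan is to place, without loss of generality (by the rotational invariance of the PPPs about the user), the serving tier-$k$ BS on the positive horizontal axis at $\mathbf{x}_k=(r_k,0)$ with height offset $h_{\mathrm{u}k}$, and to translate the event $\{\mathbf{0}\in\mathcal{T}_{kj}^{(\Delta d)}\}$ into a condition on the location of the competing tier-$j$ BS. By \eqref{Tkj}, a tier-$j$ BS at horizontal position $\mathbf{x}_j=\rho(\cos\theta,\sin\theta)$ and height offset $h_{\mathrm{u}j}$ creates a cell boundary through a test point $\mathbf{p}$ exactly on the zero set of
\[
F(\mathbf{p})=\bigl(|\mathbf{p}-\mathbf{x}_k|^2+h_{\mathrm{u}k}^2\bigr)-\beta_{kj}\bigl(|\mathbf{p}-\mathbf{x}_j|^2+h_{\mathrm{u}j}^2\bigr),
\]
which follows by equating the two biased received powers and raising to the power $2/\eta$. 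Imposing $F(\mathbf{0})=0$ gives the critical radius $\rho^\star=\sqrt{(r_k^2+h_{\mathrm{u}k}^2)/\beta_{kj}-h_{\mathrm{u}j}^2}$: a tier-$j$ BS on the circle of radius $\rho^\star$ about the user places the boundary exactly on $\mathbf{0}$.

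Next I would linearize about this circle. For a tier-$j$ BS at radius $\rho$ near $\rho^\star$ the Euclidean distance from $\mathbf{0}$ to the boundary $\{F=0\}$ is $|F(\mathbf{0})|/|\nabla_{\mathbf{p}}F(\mathbf{0})|+\mathcal{O}((\rho-\rho^\star)^2)$. Here $F(\mathbf{0})=\beta_{kj}((\rho^\star)^2-\rho^2)\approx-2\beta_{kj}\rho^\star(\rho-\rho^\star)$, while $\nabla_{\mathbf{p}}F(\mathbf{0})=2(\beta_{kj}\mathbf{x}_j-\mathbf{x}_k)$, so that the distance falls below $\Delta d$ exactly when $|\rho-\rho^\star|<\delta(\theta)$ with $\delta(\theta)=\Delta d\,|\nabla_{\mathbf{p}}F(\mathbf{0})|/(2\beta_{kj}\rho^\star)$. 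The one genuinely computational step is evaluating the gradient magnitude on the critical circle: substituting $\rho=\rho^\star$ and $\mathbf{x}_k=(r_k,0)$ into $|\beta_{kj}\mathbf{x}_j-\mathbf{x}_k|^2=\beta_{kj}^2(\rho^\star)^2+r_k^2-2\beta_{kj}\rho^\star r_k\cos\theta$ and inserting $(\rho^\star)^2$ reproduces precisely the radicand of the integrand of $\vartheta$ in \eqref{var}, i.e. $|\nabla_{\mathbf{p}}F(\mathbf{0})|=2\sqrt{r_k^2(\beta_{kj}+1)+h_{\mathrm{u}k}^2\beta_{kj}-h_{\mathrm{u}j}^2\beta_{kj}^2-2\beta_{kj}r_k\cos\theta\,\rho^\star}$. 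This is where the height offsets enter the final expression.

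The decisive point --- and the source of the factor $2$ rather than $4$ --- is the conditioning on $n=k$. Since the user is served by the tier-$k$ BS, its biased power dominates that of every tier-$j$ BS, which forces the strongest competing tier-$j$ BS to lie outside the critical circle, $\rho\ge\rho^\star$; equivalently, conditioning thins the tier-$j$ process to the exterior of the disk of radius $\rho^\star$, where it remains Poisson with intensity $\lambda_j$. Thus only the outer half of the band $\{|\rho-\rho^\star|<\delta(\theta)\}$, of radial width $\delta(\theta)$ rather than $2\delta(\theta)$, contributes. By a void/Campbell argument the conditional probability equals $\lambda_j$ times the expected area of this one-sided annular band, plus $\mathcal{O}(\Delta d^2)$ from the events of two or more tier-$j$ BSs in the band and from the boundary curvature. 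Computing the one-sided area as $\int_0^{2\pi}\rho^\star\,\delta(\theta)\,d\theta$, substituting $\delta(\theta)$ with the gradient magnitude found above, and using the $\theta\mapsto2\pi-\theta$ symmetry of the radicand to replace $\int_0^{2\pi}$ by $2\int_0^{\pi}$ collapses the constants to exactly $2\lambda_j\Delta d\,\vartheta(\alpha_{kj},r_k)$, which is \eqref{th1eq}.

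Finally, the intra-tier case $k=j$ is the specialization $\beta_{kj}=1$, $h_{\mathrm{u}k}=h_{\mathrm{u}j}$, for which $\rho^\star=r_k$, the competing BS is the second-nearest tier-$k$ BS (again necessarily outside radius $r_k$), and the radicand reduces to $2r_k^2(1-\cos\theta)$, recovering $\vartheta=x\int_0^{\pi}\sqrt{2-2\cos\theta}\,d\theta=4x$. The hardest part of the argument is this last paragraph: justifying the one-sided band under the association conditioning and cleanly controlling the $\mathcal{O}(\Delta d^2)$ remainder, because that single factor of two is exactly what renders \eqref{th1eq} consistent with the length-intensity normalization in \eqref{length} (and, in the single-tier check, with the classical Poisson--Voronoi edge-length intensity $2\sqrt{\lambda}$).
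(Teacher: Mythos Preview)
Your proposal is correct and mirrors the paper's argument: identify the set of tier-$j$ BS locations that place $\mathbf{0}$ within $\Delta d$ of the boundary, restrict to the outer half of the resulting annulus via the association conditioning $n=k$ (the paper's remark that ``there is no tier-$j$ BS between $\mathbf{0}$ and $\sqrt{(r_0^2+h_{\mathrm{um}}^2)/\beta_{\mathrm{ms}}-h_{\mathrm{us}}^2}$''), and convert its area to a probability through the PPP void formula $1-e^{-\lambda_j\mathcal{S}_A}$. The only cosmetic difference is that the paper obtains the distance $\mathbf{d}$ by writing out the Apollonius circle (center and radius) induced by \eqref{Tkj} explicitly, whereas you use the equivalent first-order quotient $|F(\mathbf{0})|/|\nabla_{\mathbf{p}}F(\mathbf{0})|$; both routes produce the same one-sided ring and hence the same $\vartheta$.
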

\begin{table}[!t]
\caption{\: Simulation parameters in accordance with \cite{3GPP_R14}}
\center
\resizebox{0.4\textwidth}{!}{
\begin{tabular}{|c c c c|}
\hline
\rowcolor{purple}
\multirow{-1}{*}{\textcolor{white}{\textbf{Parameter}}} & \multirow{-1}{*}{\textcolor{white}{\textbf{Value}} } &\multirow{-1}{*}{ \textcolor{white}{ \textbf{Parameter}}}  & \multirow{-1}{*}{\textcolor{white}{ \textbf{Value }}}  \\ \hline  \hline
& & & \\
 \multirow{-2}{*}{MBS Power $P_\mathrm{m}$:}             &  \multirow{-2}{*}{46 dBm }    &  \multirow{-2}{*}{SBS Power $P_\mathrm{s}$:}  &   \multirow{-2}{*}{24 dBm}     \\
 \cellcolor{purple!20!}  & \cellcolor{purple!20!}    &\cellcolor{purple!20!}    & \cellcolor{purple!20!}  \\
\multirow{-2}{*}{\cellcolor{purple!20!} MBS $\lambda_\mathrm{m}$:}  & \multirow{-2}{*}{\cellcolor{purple!20!}3 BS/km$^2$} & \multirow{-2}{*}{\cellcolor{purple!20!}SBS intensity $\lambda_\mathrm{s}$:}  &   \multirow{-2}{*}{\cellcolor{purple!20!} 10 BS/km$^2$}   \\
& & & \\
 \multirow{-2}{*}{MBS height $h_\mathrm{m}$:}               & \multirow{-2}{*}{40 m}  &  \multirow{-2}{*}{SBS height $h_\mathrm{s}$:}       &  \multirow{-2}{*}{25 m}   \\
\cellcolor{purple!20!} & \cellcolor{purple!20!}  & \cellcolor{purple!20!}  &\cellcolor{purple!20!}   \\
 \multirow{-2}{*}{\cellcolor{purple!20!}User velocity $v$:}           &  \multirow{-2}{*}{\cellcolor{purple!20!} 30 km/h}     & \multirow{-2}{*}{\cellcolor{purple!20!} Path loss exponent $\eta$:}     &  \multirow{-2}{*}{\cellcolor{purple!20!} 4}
  \\ \hline
\end{tabular}
}
\label{tab}
\end{table}
The special case of having the user and the BSs antennas at the same height reduces Lemma \ref{lem3} into much simpler expression as shown in the next corollary.
\vspace{0.01cm}
\begin{col}

In the special case of $h_{\mathrm{um}}=h_{\mathrm{us}}=0$, $\mathbb{P}[\mathbf{0} \in\mathcal{T}_{kj}^{(\Delta d)}\vert X=r_k,n=k]$ is given by
\vspace{-0.2cm}
\begin{multline}
\mathbb{P}[\mathbf{0} \in\mathcal{T}_{kj}^{(\Delta d)}\vert X\!=\!r_k,n\!=\!k]=\\
\frac{2\lambda_j\Delta d r_k}{\beta_{kj}}\int\limits_{0}^{\pi}\sqrt{\beta_{kj}+1-2\sqrt{\beta_{kj}}\cos\theta}d\theta+\mathcal{O}(\Delta d^2).
\label{special}
\end{multline}
\normalsize
\end{col}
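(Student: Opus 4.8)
The plan is to obtain this corollary as a direct specialization of Lemma~\ref{lem3} rather than through any fresh geometric argument. The hypothesis $h_{\mathrm{um}}=h_{\mathrm{us}}=0$ forces both relative elevations $h_{\mathrm{u}k}$ and $h_{\mathrm{u}j}$ to vanish simultaneously, so I would simply insert these zeros into the general kernel $\vartheta(\alpha_{kj},r_k)$ defined in \eqref{var}, simplify the resulting integrand, and then carry the prefactor $2\lambda_j\Delta d$ from \eqref{th1eq} through to recover \eqref{special}. Everything needed is already contained in Lemma~\ref{lem3}; no new probabilistic content enters.

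Concretely, I would take the $k\neq j$ branch of \eqref{var} and set $h_{\mathrm{u}k}^{2}=h_{\mathrm{u}j}^{2}=0$. This annihilates the two additive radical terms $h_{\mathrm{u}k}^{2}\beta_{kj}$ and $h_{\mathrm{u}j}^{2}\beta_{kj}^{2}$, and it collapses the nested radical $\sqrt{(r_k^{2}+h_{\mathrm{u}k}^{2})/\beta_{kj}-h_{\mathrm{u}j}^{2}}$ down to $r_k/\sqrt{\beta_{kj}}$. The integrand then becomes $\sqrt{r_k^{2}(\beta_{kj}+1)-2\beta_{kj}r_k\cos\theta\cdot r_k/\sqrt{\beta_{kj}}}$, whose cross term reduces to $2\sqrt{\beta_{kj}}\,r_k^{2}\cos\theta$. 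Factoring the common $r_k^{2}$ out of the square root pulls an $r_k$ outside the integral, giving $\vartheta(\alpha_{kj},r_k)=\frac{r_k}{\beta_{kj}}\int_{0}^{\pi}\sqrt{\beta_{kj}+1-2\sqrt{\beta_{kj}}\cos\theta}\,d\theta$, and multiplying by $2\lambda_j\Delta d$ while retaining the $\mathcal{O}(\Delta d^2)$ remainder yields \eqref{special} verbatim.

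Finally I would note that the single expression in \eqref{special} also covers the intra-tier case: putting $k=j$ gives $\beta_{kj}=\beta_{kk}=(B_kP_k/B_kP_k)^{2/\eta}=1$, so the integrand becomes $\sqrt{2-2\cos\theta}$ with value $4r_k$, matching the $k=j$ line $\vartheta=4x$ of \eqref{var}. There is no genuine obstacle here; the only point demanding care is the algebra of the cross term, namely verifying $\beta_{kj}\cdot(1/\sqrt{\beta_{kj}})=\sqrt{\beta_{kj}}$ so that the $r_k^{2}$ factors cleanly out of the radical and the $\beta_{kj}$ prefactor survives exactly as displayed.
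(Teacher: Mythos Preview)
Your proposal is correct and matches the paper's intent: the corollary is stated there without a separate proof, as an immediate specialization of Lemma~\ref{lem3}, and your substitution $h_{\mathrm{u}k}=h_{\mathrm{u}j}=0$ in \eqref{var} followed by the factorization of $r_k$ out of the radical is exactly the computation that delivers \eqref{special}. The consistency check with the $k=j$ branch is a nice addition but not required.
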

Note that \eqref{special} corresponds to the two tier case in~\cite{HOben}.
Since we have calculated the probability that $\mathbf{0}\in \mathcal{T}_{kj}^{(\Delta d)}$ conditioned on the current association (Lemma \ref{th1}), we can now determine the area intensity $\mu(\mathcal{T}_{kj}^{(\Delta d)})$ of the $\Delta d-$extended cell boundaries $\mathcal{T}_{kj}^{(\Delta d)}$, which is given in the next theorem.

\begin{theorem}
The area intensity or the probability of having any arbitrary point on the $\Delta d-$extended inter-tier cell boundaries is given in \eqref{area1}\footnote{As in \cite{lower_height}, it is difficult to find a closed form solution for height-aware models. Therefore, numerical evaluation is performed to solve \eqref{area1}.}.

\begin{figure*}[!t]
\vspace{-0.45cm}
\begin{align}
\mu(\mathcal{T}_{kj}^{(\Delta d)})=\begin{cases}
 \int\limits_{\sqrt{h_{\mathrm{us}}^2 \beta_\mathrm{ms}-h_{\mathrm{um}}^{2}}}^{\infty}2\lambda_j\Delta d \vartheta(\alpha_{kj},x)f_{X_{\mathrm{m}}}(x)dx+\int\limits_{0}^{\infty}2\lambda_k\Delta d \vartheta(\alpha_{jk},x)f_{X_{\mathrm{s}}}(x)dx+ \mathcal{O}(\Delta d^2), \text{ for } h_{\mathrm{um}}\leq h_{\mathrm{us}}\\
\int\limits_{0}^{\infty}2\lambda_j\Delta d \vartheta(\alpha_{kj},x)f_{X_{\mathrm{m}}}(x)dx+\int\limits_{\sqrt{h_{\mathrm{um}}^2 \beta_\mathrm{sm}-h_{\mathrm{us}}^{2}}}^{\infty}2\lambda_k\Delta d \vartheta(\alpha_{jk},x)f_{X_{\mathrm{s}}}(x)dx+ \mathcal{O}(\Delta d^2), \text{ for } h_{\mathrm{um}}> h_{\mathrm{us}}
\label{area1}
\end{cases}
\end{align}
\vspace{-0.25cm}
\begin{align}
\mu(\mathcal{T}_{kk}^{(\Delta d)})=\int_{0}^{L_k}2\lambda_k\Delta d\vartheta(\alpha_{kk},x)f_{X_{k}}(x)dx+\int_{L_k}^{\infty}2\lambda_k\Delta d\vartheta(\alpha_{kk},x)f_{X_{k}}(x)dx
\label{area2}
\end{align}
\hrulefill
\normalsize
\end{figure*}
\begin{proof}
See Appendix D.
\end{proof}
\end{theorem}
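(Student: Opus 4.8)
The plan is to obtain the area intensity by deconditioning the pointwise boundary probability \eqref{th1eq} over the user's serving tier and its service distance. As noted in the text preceding the statement, $\mu(\mathcal{T}_{kj}^{(\Delta d)})$ equals the probability that the test user at $\mathbf{0}$ lies in the $\Delta d$-extended boundary, so by the law of total probability I would write it as a sum over the possible associations $n\in\{\mathrm{m,s}\}$, integrating the conditional probability $\mathbb{P}[\mathbf{0}\in\mathcal{T}_{kj}^{(\Delta d)}\mid X=x,n]$ of \eqref{th1eq} against the corresponding service-distance density. The key structural observation is that an inter-tier boundary is shared, $\mathcal{T}_{kj}=\mathcal{T}_{jk}$, so the test user can sit on it either while served by tier $k$ (about to hand over to $j$) or while served by tier $j$; this produces the two terms of \eqref{area1}, one carrying $\vartheta(\alpha_{kj},x)$ with the density $f_{X_\mathrm{m}}$ and the other carrying $\vartheta(\alpha_{jk},x)$ with $f_{X_\mathrm{s}}$. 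For the intra-tier boundary $\mathcal{T}_{kk}$ only a single serving tier is involved, which is why \eqref{area2} has one contribution with $\vartheta(\alpha_{kk},x)=4x$ from \eqref{var}.

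For the inter-tier case I would next justify the lower integration limits. Conditioning on being served by the height-closer tier (tier m when $h_{\mathrm{um}}\le h_{\mathrm{us}}$) at service distance $x$, a boundary point is reachable only if some BS of the opposing tier can equalize the biased powers; solving the equal-biased-power locus forces the opposing BS to sit at squared horizontal distance $x_0^2=\beta_{\mathrm{sm}}(x^2+h_{\mathrm{um}}^2)-h_{\mathrm{us}}^2$, which is feasible ($x_0^2\ge 0$) precisely when $x\ge\sqrt{h_{\mathrm{us}}^2\beta_{\mathrm{ms}}-h_{\mathrm{um}}^2}=L_\mathrm{m}$. Hence for $x<L_\mathrm{m}$ the conditional probability in \eqref{th1eq} is identically zero and that part of the integral drops out, giving the lower limit $L_\mathrm{m}$, while the contribution of the other (height-farther) tier starts at $0$ because its $L$ vanishes in this regime. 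The symmetric argument with m and s interchanged gives the $h_{\mathrm{um}}>h_{\mathrm{us}}$ branch. Substituting \eqref{th1eq} together with the densities \eqref{macrodist}--\eqref{smalldist}, weighted by the corresponding association (sub)probabilities $A_{k_1},A_{k_2}$ that combine with the piecewise normalizers, then assembles \eqref{area1}.

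The intra-tier expression \eqref{area2} uses the same deconditioning but is simpler: an intra-tier cell boundary is present regardless of the antenna heights, so there is no feasibility threshold to kill part of the range and the integral runs over all $x\ge 0$. It is split at $L_k$ only because $f_{X_k}$ is defined piecewise there; inserting $\vartheta(\alpha_{kk},x)=4x$ on each piece yields the two integrals of \eqref{area2}.

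I expect the main obstacle to be the bookkeeping at the interface of the two lemmas rather than any single hard estimate: first, turning the shared-boundary observation into the correct two-sided sum while pairing each term with the right $\vartheta(\alpha_{\cdot\cdot},x)$, the right piecewise density, and the right association weight; and second, controlling the remainder so that the per-$x$ error $\mathcal{O}(\Delta d^2)$ of \eqref{th1eq} survives deconditioning as a global $\mathcal{O}(\Delta d^2)$. The latter needs a uniform-in-$x$ bound on the remainder that stays integrable against the exponentially decaying densities (note $\vartheta$ grows only linearly in $x$ while $f_{X_k}$ decays like $e^{-cx^2}$), after which a Fubini/dominated-convergence argument lets me integrate term by term and collect the stated $\mathcal{O}(\Delta d^2)$, legitimizing the subsequent limit \eqref{length}.
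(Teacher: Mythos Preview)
Your proposal is correct and follows essentially the same route as the paper's Appendix~D: write $\mu(\mathcal{T}_{kj}^{(\Delta d)})=\mathbb{P}[\mathbf{0}\in\mathcal{T}_{kj}^{(\Delta d)}]$, split over the serving tier via total probability, and decondition \eqref{th1eq} against the service-distance densities of Lemma~2. Your feasibility argument for the lower limits $L_\mathrm{m},L_\mathrm{s}$ and your integrability check for the $\mathcal{O}(\Delta d^2)$ remainder actually go beyond what the paper spells out, which simply cites the height-dependent supports of \eqref{macrodist}--\eqref{smalldist} and substitutes.
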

\begin{col}
The area intensity of $\Delta d-$extended intra-tier cell boundaries can be obtained by setting $k=j$ in \eqref{area1} and is given in \eqref{area2}.
\label{col2}
\end{col}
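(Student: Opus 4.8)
The plan is to re-run the derivation behind \eqref{area1}, specialized to a single tier $k=j$, and verify that it collapses to \eqref{area2}. Recall that the area intensity equals the probability that the test user at $\mathbf{0}$ falls in the $\Delta d$-strip around the boundary, which I would obtain by conditioning on the serving tier $n$ and the service distance $X$ through Lemma \ref{lem3} and then deconditioning. For $k\neq j$ a user on $\mathcal{T}_{kj}^{(\Delta d)}$ may be served by either tier, which is precisely why \eqref{area1} is a sum of two integrals: the $n=k$ term deconditioned over the tier-$k$ service-distance density and the $n=j$ term deconditioned over the tier-$j$ one, each carrying its conditional boundary probability from Lemma \ref{lem3}.

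First I would read off the simplifications at $k=j$. The second branch of \eqref{var} gives $\vartheta(\alpha_{kk},x)=4x$, so Lemma \ref{lem3} yields $\mathbb{P}[\mathbf{0}\in\mathcal{T}_{kk}^{(\Delta d)}\mid X=x,\,n=k]=8\lambda_k\Delta d\,x+\mathcal{O}(\Delta d^2)$. Since the serving BS and the candidate handover BS now lie in the same tier, the association is necessarily with tier $k$: the two serving-tier contributions of the inter-tier case merge into the single event $n=k$, with the single service-distance law $f_{X_k}$. Deconditioning then leaves the single integral $\int_0^\infty 2\lambda_k\Delta d\,\vartheta(\alpha_{kk},x)f_{X_k}(x)\,dx$.

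Next I would split this integral at the breakpoint $L_k$ of the piecewise density $f_{X_k}$ from Lemma 2, which immediately reproduces the two integrals of \eqref{area2}. To reconcile this with the literal instruction ``set $k=j$ in \eqref{area1},'' I would note that $k=j$ forces $\beta_{kk}=1$ and $h_{\mathrm{u}k}=h_{\mathrm{u}j}$, so the cross-tier lower cutoff $\sqrt{h_{\mathrm{u}j}^2\beta_{kj}-h_{\mathrm{u}k}^2}$ appearing in \eqref{area1} degenerates to $L_k$ and the two tier-specific densities coincide with $f_{X_k}$.

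The main obstacle I anticipate is the bookkeeping of this degeneracy. In \eqref{area1} the two integrals count genuinely distinct events, namely the two sides of a boundary separating different tiers, whereas under $k=j$ a naive term-by-term substitution would double count, since Lemma \ref{lem3} with the factor $2\lambda_k\vartheta(\alpha_{kk},x)$ already accounts for the full neighbouring tier-$k$ boundary on both sides. I would therefore argue carefully that the two integrals in \eqref{area2} arise solely from the piecewise structure of $f_{X_k}$ across $L_k$, so that the single tier-$k$ deconditioning, rather than a sum over two serving tiers, is what yields \eqref{area2} without a spurious factor of two.
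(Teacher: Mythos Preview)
Your proposal is correct and matches the paper's (implicit) reasoning: the paper offers no separate proof for Corollary~\ref{col2} and treats it as an immediate specialization of Theorem~1 via Appendix~D. Your observation that the two integrals in \eqref{area2} come from the piecewise form of $f_{X_k}$ across $L_k$, rather than from two distinct serving-tier events, and your caution about the spurious factor of two under a naive $k=j$ substitution in \eqref{area1}, are exactly the bookkeeping points the paper leaves tacit.
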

Note that \eqref{area2} refers to the intra-tier HO scenario where the user height is different from that of serving/target BS. The area intensities of the cell boundaries computed in Theorem 1 and Corollary \ref{col2} can now be exploited to determine the inter and intra-tier HO rates, respectively. First, we calculate the length intensity from the area intensity as shown in \eqref{length} and then substitute the result in \eqref{Hkj} to obtain the HO rates.
\begin{figure}[!t]
\centering
\includegraphics[width=0.9\linewidth,keepaspectratio]{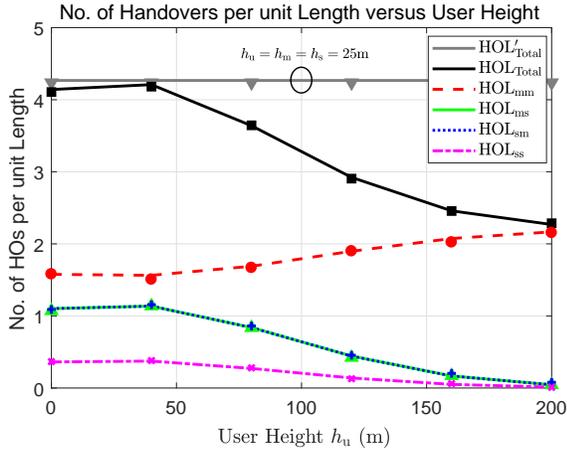} \vspace{-0.2cm}\caption{Handover rates per unit length versus user height without biasing}
\label{HO1}
\end{figure}

\begin{figure}[!t]
\centering
\vspace{-0.2cm}
\includegraphics[width=0.9\linewidth,keepaspectratio]{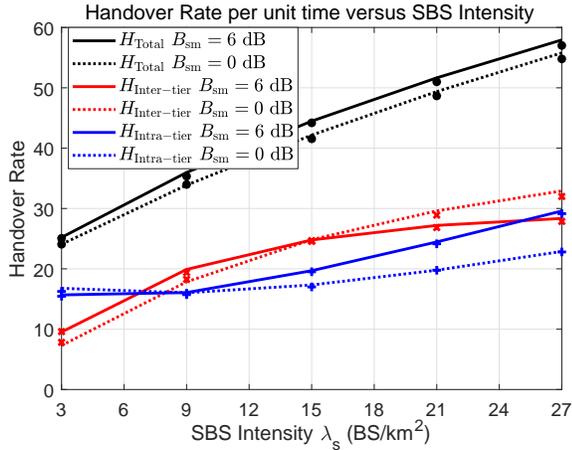} \vspace{-0.2cm}\caption{Handover rates versus SBS intensity with $B_\mathrm{sm}=\frac{B_\mathrm{s}}{B_\mathrm{m}}$}
\label{HO2}
\vspace{-0.25cm}
\end{figure}
\section{Numerical Results}
\label{results}
Figs. \ref{HO1} and \ref{HO2} show the HO rates with lines and markers representing analysis and simulations results, respectively, with parameters shown in Table \ref{tab}. Fig. \ref{HO1} depicts the HO rates for constant and varying user heights. Note that $\mathrm{HOL}_{\mathrm{Total}}^{\prime}$ refers to the conventional total HO rate where there exists no disparity among the user and BSs heights. Despite the decreasing overall HO rate with the increasing user height, Fig. \ref{HO1} marks MBSs as bearing the increasing HO trend. This is because we assume that $h_{\mathrm{m}} >h_{\mathrm{s}}$ and $P_{\mathrm{m}} >P_{\mathrm{s}}$, which is the practical scenario. Thus there arises a need to provide an additional signalling capacity to facilitate macro-to-macro HOs in the scenarios where the relative difference between the user and the BSs' antenna heights is not negligible. However, biasing can be employed to balance the HO signalling load between the two tiers. Fig. \ref{HO2} epitomizes the HO rates with varying SBS intensity and the biasing factors of 0 dB and 6 dB. It is evident from the trends that the 6 dB bias in SBSs results in balancing the inter and intra-tier HO rates.

\section{Conclusion}
\label{conclusion}
This article characterizes the association probabilities and inter/intra tier HO rates in $N$-tier UDN by incorporating user/BSs antenna heights into the mathematical analysis. The proposed analytical model can be applied to various practical scenarios where the user could be a land vehicle or an aerial vehicle. In particular, we study the impact of user height on the association probabilities and HO rate, and validate our model via Monte Carlo simulations. The numerical results show decrease in the overall HO rate with the increase in the user height while giving rise to the macro-to-macro HO rate, which can be balanced by adding a positive bias to the SBSs. By the virtue of this model, some HO management techniques could be investigated to minimize the effect of HO rate on the UDN KPIs. As a future work, we will study more advanced propagation and fading models for user-to-BS associations and HO rates.
\vspace{-0.4cm}
\appendices
\section{Proof of Lemma 1}
Let $R_\mathrm{m}=\sqrt{Z_{\mathrm{m}}^{2}-h_{\mathrm{um}}^{2}}$ be the horizontal distance between the MBS and the test user. Then the distribution of $Z_m$ can be calculated as
\vspace{-0.2cm}
\begin{align}
F_{Z_{\mathrm{m}}}(x)&=\mathbb{P}[R_{\mathrm{m}} \leq \sqrt{x^2 -h_{\mathrm{um}}^{2}} \hspace{0.05cm}] \notag \\
&\myeq 1-e^{-\pi\lambda_\mathrm{m} (x^2 - h_{\mathrm{um}}^2)}, \quad  h_{\mathrm{um}}<x<\infty,
\end{align}\normalsize
 where (a) follows from the null probability of PPP. Similarly, the distribution of $Z_s$ is given by
 \begin{align}
 F_{Z_{\mathrm{s}}}(x)=1-e^{-\pi\lambda_\mathrm{s} (x^2 - h_{\mathrm{us}}^2)},\quad h_{\mathrm{us}}<x<\infty,
\end{align}\normalsize
  For the macro association probability $A_\mathrm{m}$, we first write
  \begin{align}
  A_\mathrm{m}&=\mathbb{P}[B_\mathrm{m}P_\mathrm{m} Z_{\mathrm{m}}^{-\eta}>B_\mathrm{s}P_\mathrm{s} Z_{\mathrm{s}}^{-\eta}]\notag \\
  &= \mathbb{E}_{Z_{\mathrm{m}}}\bigg\{\mathbb{P}[Z_{\mathrm{s}}>\bigg(\frac{B_\mathrm{s}P_\mathrm{s}}{B_\mathrm{m}P_\mathrm{m}}\bigg)^{1/\eta} Z_{\mathrm{m}}]\bigg\}
  \label{Ai}
   \end{align}
   \normalsize
   Then we solve \eqref{Ai} by exploiting the fact that $P[Z_{\mathrm{s}}>(\frac{B_\mathrm{s}P_\mathrm{s}}{B_\mathrm{m}P_\mathrm{m}})^{1/\eta} Z_{\mathrm{m}}]=1$ over the range $h_{\mathrm{um}}\leq Z_m\leq h_{\mathrm{us}}\big(\frac{B_\mathrm{m}P_\mathrm{m}}{B_\mathrm{s}P_\mathrm{s}}\big)^{1/\eta}$ and $P[Z_{\mathrm{s}}>(\frac{P_\mathrm{s}}{P_\mathrm{m}})^{1/\eta} Z_{\mathrm{m}}]=e^{-\pi\lambda_\mathrm{s}[(\frac{B_\mathrm{s}P_\mathrm{s}}{B_\mathrm{m}P_\mathrm{m}})^{2/\eta}Z_{\mathrm{m}}^{2}-h_{\mathrm{us}}^2]}$ over the range $h_{\mathrm{us}}\big(\frac{B_\mathrm{m}P_\mathrm{m}}{B_\mathrm{s}P_\mathrm{s}}\big)^{1/\eta}<Z_m<\infty$ given that $h_{\mathrm{um}}\leq h_{\mathrm{us}}$. In case of $h_{\mathrm{um}}> h_{\mathrm{us}}$, then $P[Z_{\mathrm{s}}>(\frac{B_\mathrm{s}P_\mathrm{s}}{B_\mathrm{m}P_\mathrm{m}})^{1/\eta} Z_{\mathrm{m}}]=e^{-\pi\lambda_\mathrm{s}[(\frac{B_\mathrm{s}P_\mathrm{s}}{B_\mathrm{m}P_\mathrm{m}})^{2/\eta}Z_{\mathrm{m}}^{2}-h_{\mathrm{us}}^2]}$ over the range $h_{\mathrm{um}}<Z_m<\infty$. A similar approach is followed to calculate the SBS association probability $A_\mathrm{s}$.
   \vspace{-0.25cm}
\section{Proof of Lemma 2}
In order to calculate the distance distribution, we first calculate the complementary cumulative distribution function (CCDF) given that the user associates with the $k$-th tier, $k\in\{\mathrm{m,s}\}$.
\vspace{-0.2cm}
\begin{eqnarray}
\mathbb{P}[X_{k}> x]=\mathbb{P}[R_{k}> x\big| n=k]=\frac{\mathbb{P}[R_{k}> x, n=k]}{\mathbb{P}[n=k]}
\label{A}
\end{eqnarray}
where $\mathbb{P}[n=k]=A_{k}$, which is given in Lemma~\ref{association}. The joint distribution $\mathbb{P}[R_{k}> x, n=k]$ is calculated as follows
\begin{align}
\!\!\!\!\mathbb{P}[R_{k}\!>\! x, n\!=\!k]\!&=\!\mathbb{P}[R_{k}\!> x, B_k P_k Z_{k}^{-\eta}\!>\!B_j P_j Z_{j}^{-\eta}] \notag\\
&\myeqb\int_{x}^{\infty}e^{-\pi\lambda_j[\beta_{jk}(x^2+h_{\mathrm{u}k}^{2})-h_{\mathrm{u}j}^{2}]}f_{R_{k}}(x)dx
\label{cond}
\end{align}
where (b) follows from the null probability of PPP with $f_{R_{k}}(x)=2\pi\lambda_k x e^{-\pi\lambda_kx^2}$. Now
$f_{X_{k}}(x)$ is calculated by substituting \eqref{cond} in \eqref{A} and invoking $A_k$ given in \eqref{A1} and \eqref{A2} conditioned on the heights i.e., $h_{\mathrm{um}}$ and $h_{\mathrm{us}}$, and then taking the derivative w.r.t. $x$.
\vspace{-0.25cm}
\section{Proof of Lemma 3}
Let $\mathcal{S}$ be the location of tier-$j$ SBS such that the condition \textbf{0} $\in \mathcal{T}_{kj}^{(\Delta d)}$ is satisfied. Mathematically, we can define $\mathcal{S}$ as
\vspace{-0.15cm}
\begin{align}
\mathcal{S}=\left\{x_0,y_0\vert \mathbf{d}<\Delta d\right\}
\label{S}
\end{align}
\normalsize
where $\mathbf{d}$ is the distance between \textbf{0} and $\mathcal{T}_{kj}^{(\Delta d)}$, which is calculated by the mathematical manipulation of \eqref{Tkj}.
It is worth mentioning that $\mathcal{T}_{kj}^{(\Delta d)}$ represents a circle centered at $[\frac{\beta_{\mathrm{ms}} x_0-r_0}{\beta_{\mathrm{ms}}-1},\frac{\beta_{\mathrm{ms}}y_0}{\beta_{\mathrm{ms}}-1}]$ with radius $\frac{\sqrt{\beta_{\mathrm{ms}}(r_0^2+x_0^2+y_0^2-2x_0r_0+h_{\mathrm{um}}^{2}+h_{\mathrm{us}}^{2}-\frac{h_{\mathrm{um}}^{2}}{\beta_{\mathrm{ms}}}-h_{\mathrm{us}}^{2}\beta_{\mathrm{ms}})}}{\beta_{\mathrm{ms}}-1}$. Thus the distance from \textbf{0} to the trace is given by

\vspace{-0.2cm}
\small
\begin{multline}
\mathbf{d}=\frac{\sqrt{(\beta_{\mathrm{ms}}x_0-r_0)^2+\beta_{\mathrm{ms}}^{2}y_{0}^{2}}}{\beta_{\mathrm{ms}}-1}-\\ \!\!\!\!\!\frac{\sqrt{\beta_{\mathrm{ms}}(r_0^2+x_0^2+y_0^2-2x_0r_0+h_{\mathrm{um}}^{2}+h_{\mathrm{us}}^{2}-\frac{h_{\mathrm{um}}^{2}}{\beta_{\mathrm{ms}}}-h_{\mathrm{us}}^{2}\beta_{\mathrm{ms}})}}{\beta_{\mathrm{ms}}-1}
\label{distance}
\end{multline}
\normalsize
Substituting \eqref{distance} in \eqref{S} and transforming $(x_0,y_0)$ to the polar coordinates $(r,\theta)$ yields

\vspace{-0.2cm}
\small
\begin{multline*}
\mathcal{S}=\bigg\{(r,\theta)\bigg\vert \bigg\vert r^2-\frac{r_{0}^{2}}{\beta_{\mathrm{ms}}}-\frac{h_{\mathrm{um}}^{2}}{\beta_{\mathrm{ms}}}+h_{\mathrm{us}}^{2}\bigg\vert <\frac{2\Delta d}{\beta_{\mathrm{ms}}} \Bigg[r_{0}^{2}(1+\!\beta_{\mathrm{ms}})+\\ h_{\mathrm{um}}^{2}\beta_{\mathrm{ms}}-
h_{\mathrm{us}}^{2}\beta_{\mathrm{ms}}^{2}-2\beta_{\mathrm{ms}}r_0\cos(\theta)\sqrt{\frac{r_{0}^{2}+h_{\mathrm{um}}^{2}}{\beta_{\mathrm{ms}}}\!-\!h_{\mathrm{us}}^{2}}+\mathcal{O}(\Delta d^2)\Bigg]^{1/2}
\end{multline*}
\normalsize
It is worth stating that there is no tier-$j$ BS between \textbf{0} and $\sqrt{\frac{r_{0}^{2}+h_{\mathrm{um}}^{2}}{\beta_{\mathrm{ms}}}\!-\!h_{\mathrm{us}}^{2}}$. This implies that $\mathcal{S}$ represents a ring region with area given by
\begin{multline*}
\mathcal{S}_A=\frac{2\Delta d}{\beta_{\mathrm{ms}}}
\int_{0}^{\pi}\bigg[r_{0}^{2}(\!1+\!\!\beta_{\mathrm{ms}})+h_{\mathrm{um}}^{2}\beta_{\mathrm{ms}}-\\
h_{\mathrm{us}}^{2}\beta_{\mathrm{ms}}^{2}-2\beta_{\mathrm{ms}}r_0\cos(\theta)\sqrt{\frac{r_{0}^{2}+h_{\mathrm{um}}^{2}}{\beta_{\mathrm{ms}}}-h_{\mathrm{us}}^{2}}\Bigg]^{1/2}d\theta +\mathcal{O}(\Delta d^2)
\end{multline*}
\normalsize
where $\mathcal{S}_A$ can be written as
\begin{align}
\mathcal{S}_A=2\Delta d \vartheta(\alpha_{kj},r_k) +\mathcal{O}(\Delta d^2)
\end{align}
\normalsize
Now $\mathbb{P}[\mathbf{0} \in\mathcal{T}_{kj}^{(\Delta d)}\vert X=r_k,n=k]$ is calculated using the fact that there is at least one tier-$j$ BS in $\mathcal{S}$, which is given by

\vspace{-0.2cm}
\small
\begin{align}
\!\!\!\!\mathbb{P}[\mathbf{0} \in\mathcal{T}_{kj}^{(\Delta d)}\vert X\!=\!r_k,n\!=\!k]&=1-e^{- \lambda_j\mathcal{S}_A} \notag \\
&=2\lambda_j\Delta d \vartheta(\alpha_{kj},r_k)+\mathcal{O}(\Delta d^2)
\end{align}
\normalsize
\vspace{-0.25cm}
\section{Proof of Theorem 1}
Let $\mu(\mathcal{T}_{kj}^{(\Delta d)})$ be the area intensity of cell boundaries, which is equal to the probability of having an arbitrary point in a unit square. This implies that

\vspace{-0.2cm}
\small
\begin{multline}
\mu(\mathcal{T}_{kj}^{(\Delta d)})=\mathbb{P}[\mathbf{0}\in\mathcal{T}_{kj}^{(\Delta d)}]\\
=\!\!\!\!\!\!\!\int\limits_{r_{\mathrm{m}}\in S_m}^{}\!\!\!\mathbb{P}[\mathbf{0} \in\mathcal{T}_{kj}^{(\Delta d)}\vert X=r_\mathrm{m},n=k]
f_X(r_{\mathrm{m}}\vert n=k)\mathbb{P}[n=k] dr_{\mathrm{m}}\\+\!\!\!\!\!\!\! \int\limits_{r_{\mathrm{s}}\in S_s}^{}\mathbb{P}[\textbf{0} \in\mathcal{T}_{kj}^{(\Delta d)}\vert X=r_{\mathrm{s}},n=j]
f_X(r_{\mathrm{s}}\vert n=j)\mathbb{P}[n=j] dr_{\mathrm{s}}
\label{area}
\end{multline}
\normalsize
where the integration regions $S_m$ and $S_s$ correspond to the heights dependent boundaries given in \eqref{macrodist} and \eqref{smalldist}, respectively. The theorem is proved by the substitution of \eqref{A1}, \eqref{A2}, \eqref{macrodist}, \eqref{smalldist}, and \eqref{th1eq} in \eqref{area}.
\vspace{-0.4cm}
\bibliographystyle{IEEEtran}
\bibliography{IEEEabrv,mybibr}
\end{document}